\documentclass[11pt]{article}
\RequirePackage{algorithm,algorithmic,amssymb,epsf,epic,url}

%%%%%%%%%%%%%%%%%%%%%%%%%%%%%%%%%%%%%%%%%%%%%%%%%%%%%%%%%%%%
%
%   PAGE LENGTH PARAMETERS
%
%%%%%%%%%%%%%%%%%%%%%%%%%%%%%%%%%%%%%%%%%%%%%%%%%%%%%%%%%%%%
%
% Springer Verlag format: depends on font size. For 12pt:
% 9.2 x 6.3 inches, which is roughly 23 x 15.8 cm
% Latest - in cm, and slightly changed: 23.8 x 15.4cm (narrower).
%
% MENTION: REDUCE TO 83%
%
%Margins adjusted properly

%%%%%%%%%%%%%%%%%%%%%%%%%%%%%%%%%%%%%%
% Springer Verlag format: depends on font size. For 10pt:
%
% PROBLEM - wisdom does not have 10pt.sty
%
% For 10pt they wanted pages of 5.3 x 7.9 inches.
% But apparently they changed that: Now, they specify things in cm,
% and require 12.8 x 19.8 cm (which is 5.1 x 7.9 inches) (narrower).
%Margins not adjusted properly

%%%%%%%%%%%%%%%%%%%%%%%%%%%%%%%%%%%%%%
% Dense format
\def\denseformat{
\setlength{\textheight}{9.5in}
\setlength{\textwidth}{6.9in}
\setlength{\evensidemargin}{-0.3in}
\setlength{\oddsidemargin}{-0.3in}
\setlength{\headsep}{10pt}
\setlength{\topmargin}{-0.44in}
\setlength{\columnsep}{0.375in}
\setlength{\itemsep}{0pt}
}
%%%%%%%%%%%%%%%%%%%%%%%%%%%%%%%%%%%%%%
% New Dense format

%%%%%%%%%%%%%%%%%%%%%%%%%%%%%%%%%%%%%%
% mid format

%%%%%%%%%%%%%%%%%%%%%%%%%%%%%%%%%%%%%%
% mid-spacy format

%%%%%%%%%%%%%%%%%%%%%%%%%%%%%%%%%%%%%%
% Spacy format

%%%%%%%%%%%%%%%%%%%%%%%%%%%%%%%%%%%%%%
% Super spacy format

%%%%%%%%%%%%%%%%%%%%%%%%%%%%%%%%%%%%%%%%%%%%%%%%%%%%%%%%%%%%
%
% DEFINING THEOREM-LIKE ENVIRONMENTS 
%
%%%%%%%%%%%%%%%%%%%%%%%%%%%%%%%%%%%%%%%%%%%%%%%%%%%%%%%%%%%%
%
% Examine sometime. 
% Here: the first, theorem, is defined [section], 
%       and the rest are defined [theorem].
%
% Possible: define all as [section].
%
% Better, if I copy just a few lines.
%
\newtheorem{theorem}{Theorem}[section]

\newtheorem{lemma}[theorem]{Lemma}

\newtheorem{corollary}[theorem]{Corollary}

%%%%%%%%%%%%%%%%%%%%%%%%%%%%%%%%%%%%%%%%%%%%%%%%%%%%%%%%%%%

%%%%%%%%%%%%%%%%%%%%%%%%%%%%%%%%%%%%%%%%%%%%%%%%%%%%%%%%%%%%
%
% MY MACROS
%
%%%%%%%%%%%%%%%%%%%%%%%%%%%%%%%%%%%%%%%%%%%%%%%%%%%%%%%%%%%%

\def\boldhead#1:{\par\vskip 7pt\noindent{\bf #1:}\hskip 10pt}
\def\ithead#1:{\par\vskip 7pt\noindent{\it #1:}\hskip 10pt}

\def\inline#1:{\par\vskip 7pt\noindent{\bf #1:}\hskip 10pt}
\def\midinline#1:{\par\noindent{\bf #1:}\hskip 10pt}
\def\dnsinline#1:{\par\vskip -7pt\noindent{\bf #1:}\hskip 10pt}
\def\ddnsinline#1:{\newline{\bf #1:}\hskip 10pt}
\def\largeinline#1:{\par\vskip 7pt\noindent{\large\bf #1:}\hskip 10pt}
%
%The next command is essentially equivalent to \section*,
%except smaller font

%%%%%%%%%%%%%%%%%%%%%%%%%%%%%%%%%%
\long\def\comment #1\commentend{}
\long\def\commhide #1\commhideend{}
\long\def\commfull #1\commend{#1}
\long\def\commabs #1\commenda{}
\long\def\commtim #1\commendt{#1}
\long\def\commb #1\commbend{}
%
%% FOR LN:
%
\long\def\commedit #1\commeditend{} % Editing comments, marked also by $>>>$ 

\long\def\commB #1\commBend{}       % Omit in 1996 (both TR and Siena)
                                    % Leave for phase B.

\long\def\commex #1\commexend{}     % LN home exercise (hide solutions)

\long\def\commsiena #1\commsienaend{}  % omit in Siena, show in TR
                                         
\long\def\commBI #1\commBIend{}  % omit in Bar-Ilan
                                         
%%%%%%%%%%%%%%%%%%%%%%%%%%%%%%%%%%

\long\def\CProof #1\CQED{}

\def\blackslug{\hbox{\hskip 1pt \vrule width 4pt height 8pt
    depth 1.5pt \hskip 1pt}}
\def\QED{\quad\blackslug\lower 8.5pt\null\par}
% In-line QED, for ending a proof with a $$ formula

%\def\Proof{\noindent{\bf Proof:~}}

\long\def\PPP#1{\noindent{\bf Proof:}{ #1}{\quad\blackslug\lower 8.5pt\null}}

\long\def\denspar #1\densend
{#1}
%{{\renewcommand{\baselinestretch}{0.8}\small #1\par\medskip}}

%%%%%%%%%%%%%%%%%%%%%%%%%%%%%%%%%%%%%%%%%%%%%%%%%%%
%
%  DELIMITER PAIRS AND MATHEMATICAL FUNCTIONS
%
%%%%%%%%%%%%%%%%%%%%%%%%%%%%%%%%%%%%%%%%%%%%%%%%%%%
% \newcommand{\attention}[1]{{\bf \textcolor{red}{\# #1 \#}} \\}

%\newcommand{\prob}[1]{\Pr\left\{ #1 \right\}}

%%%%%%%%%%%%%%%%%%%%%%%%%%%%%%%%%%%%%%%%%%%%%%%%%%%%%%%
%
%  MARGINAL NOTES FOR COMMUNICATING WITH COAUTHORS
%
%%%%%%%%%%%%%%%%%%%%%%%%%%%%%%%%%%%%%%%%%%%%%%%%%%%%%%%

\setlength{\marginparwidth}{1in}
\setlength{\marginparpush}{-5ex}
\newif\ifnotesw\noteswtrue% T to show box & marginal notes; F supresses.
   {\ifnotesw\marginpar[\hfill\(\top\)]{\(\top\)}\fi}%
   {\ifnotesw\marginpar[\hfill\(\bot\)]{\(\bot\)}\fi}

\newcommand{\mnote}[1]%
    {\ifnotesw\marginpar%
        [{\scriptsize\it\begin{minipage}[t]{\marginparwidth}
        \raggedleft#1%
                        \end{minipage}}]%
        {\scriptsize\it\begin{minipage}[t]{\marginparwidth}
        \raggedright#1%
                        \end{minipage}}%
    \fi}

%%%%%%%%%%%%%%%%%%%%%%%%%%%%%%%%%%%%%%%%%%%%%%%%%%%%%%%%%%%%
%
% SPECIAL LETTERS
%
%%%%%%%%%%%%%%%%%%%%%%%%%%%%%%%%%%%%%%%%%%%%%%%%%%%%%%%%%%%%

%%%%%%%%%%%%%%%%%%

%\def\wmax{{\hat \omega}}
%%%%%%%%%%%%%%%%%%

      % seems that \bf is occupied

%%%%%%%%%%%%%%%%%%

%%%%%%%%%%%%%%%%%%%%%%%%%%%%%%%%%%%%%%%%%%%%%%%%%%%%%%%

%%%%%%%%%%%%%%%%%%%%%%%%%%%%%%%%%%%%%%%%%%%%%%%%%%%%%%%
%
% DOUBLING LETTERS (e.g. IE as E).
%
%%%%%%%%%%%%%%%%%%%%%%%%%%%%%%%%%%%%%%%%%%%%%%%%%%%%%%%
%
% For the Z I changes to san-serif, to avoid strange tags,
% and decreased the spacing (from -3 to -4).
%
\def\MathF{\hbox{\rm I\kern-2pt F}}
\def\MathP{\hbox{\rm I\kern-2pt P}}
\def\MathR{\hbox{\rm I\kern-2pt R}}
\def\MathZ{\hbox{\sf Z\kern-4pt Z}}
\def\MathN{\hbox{\rm I\kern-2pt I\kern-3.1pt N}}
\def\MathC{\hbox{\rm \kern0.7pt\raise0.8pt\hbox{\footnotesize I}
\kern-4.2pt C}}
\def\MathQ{\hbox{\rm I\kern-6pt Q}}

%% Fails:
%% \def\MathC{\hbox{\rm \kern0.7pt\raise0.8pt\hbox{\specialeightrm I}
%% \kern-4.2pt C}}

%%%%%%%%%%%%%%%%%%%%%%%%%%%%%%%%%%%%%%%%%%%%%%%%%%%%%%%%%%%%
% ANOTHER NICE FONT TO USE - for doubled letters

%%%%%%%%%%%%%%%%%%%%%%%%%%%%%%%%%%%%%%%%%%%%%%%%%%%%%%%%%%%%

%%%%%%%%%%%%%%%%%%%%%%%%%%%%%%%%%%%%%%%%%%%%%
%
% MATHEMATICAL NOTATION
%
%%%%%%%%%%%%%%%%%%%%%%%%%%%%%%%%%%%%%%%%%%%%%

% \tends  -  for limit notation (f(x)--->0)
% Avishai's version:::

% Malki's version:::
\newsavebox{\ttop}\newsavebox{\bbot}

%

% nice AMS TeX empty set

\def\eps{\epsilon}

%\def\polylog{\mbox{polylog}\,}
%\def\polylog{\mbox{\it polylog}\,}

%%%%%%%%%%%%%%%%%%%%%%%%%%%%%%%%%%%%%%%%%%%%%
%
% PROBABILITY TERMINOLOGY
%
%%%%%%%%%%%%%%%%%%%%%%%%%%%%%%%%%%%%%%%%%%%%%
%\newcommand{\Prob}[1]{\Pr\left( #1 \right)}

%\newcommand{\Set}[1]{\left\{ #1 \right\}}
%\def\Prob{\mbox{\tt Prob}}

%\def\Expect{\mbox{\bf E}}

%%%###########################################################

\denseformat
%\spacyformat

\newcommand {\ignore} [1] {}

\begin{document}

\title{Fault-Tolerant Spanners for Doubling Metrics: Better and Simpler}
\author{
Shay Solomon \thanks{Department of Computer Science and Applied Mathematics, The Weizmann Institute of Science, Rehovot 76100, Israel.
E-mail: {\tt shay.solomon@weizmann.ac.il}.
Part of this work was done while this author was a graduate student in the Department of Computer Science, Ben-Gurion University of the Negev,
under the support of the Clore Fellowship grant No.\ 81265410, the BSF grant No.\ 2008430, and the ISF grant No.\ 87209011.}}

\date{\empty}

\begin{titlepage}
\def\thepage{}
\maketitle

\begin{abstract}
In STOC'95 Arya et al.\ \cite{ADMSS95} conjectured that for any constant dimensional $n$-point Euclidean space, 
a $(1+\eps)$-spanner with constant degree, diameter $O(\log n)$ and weight $O(\log n) \cdot \omega(MST)$
can be built in $O(n \cdot \log n)$ time. Recently Elkin and Solomon \cite{ES12} (technical report, April 2012) proved this 
conjecture of Arya et al.\ in the affirmative. In fact, the proof of \cite{ES12} is more general in two ways.
First, it applies to arbitrary doubling metrics. Second, it provides a complete tradeoff between the three involved parameters
that is tight (up to constant factors) in the entire range.

Subsequently,  Chan et al.\ \cite{CLN12} (technical report, July 2012]) provided another 
proof for Arya et al.'s conjecture, which is simpler  than the proof of Elkin and Solomon \cite{ES12}.
Moreover, Chan et al.\ \cite{CLN12} also showed that one can build a fault-tolerant (FT) spanner with similar properties. 
Specifically, they showed that there exists a $k$-FT $(1+\eps)$-spanner with degree $O(k^2)$,
diameter $O(\log n)$ and weight $O(k^3 \cdot \log n) \cdot \omega(MST)$.
The running time of the construction of \cite{CLN12} was not analyzed.  

In this work we improve the results of Chan et al.\ \cite{CLN12}, using a simpler proof.
Specifically, we present a simple proof which shows that a $k$-FT $(1+\eps)$-spanner with degree $O(k^2)$,
diameter $O(\log n)$ and weight $O(k^2 \cdot \log n) \cdot \omega(MST)$ can be built in $O(n \cdot (\log n + k^2))$ time.
Similarly to the constructions of \cite{ES12} and \cite{CLN12}, our construction applies to arbitrary doubling metrics.
However, in contrast to the construction of Elkin and Solomon \cite{ES12},
our construction fails to provide a complete (and tight) tradeoff between the three involved parameters. The construction of Chan et al.\ \cite{CLN12} has this drawback too.

For random point sets in $\mathbb R^d$, we ``shave'' a factor of $\log n$ from the weight bound.
Specifically, in this case our construction provides within the same time $O(n \cdot (\log n + k^2))$, a $k$-FT $(1+\eps)$-spanner with degree $O(k^2)$,
diameter $O(\log n)$ and weight that is with high probability $O(k^2) \cdot \omega(MST)$.
\end{abstract} 
\end{titlepage}

\pagenumbering {arabic} 

\section{Introduction}
Consider a set $P$ of $n$ points in $\mathbb R^d$ and a number $t \ge 1$,
and let $G = (P,E)$ be a graph in which the weight $\omega(x,y)$ of each edge $e =(x,y)\in E$ 
is equal to the Euclidean distance $\|x-y\|$ between $x$ and $y$.   
The graph $G$ is called a \emph{$t$-spanner} for $P$ if for every $p,q \in P$,
there is a path in $G$ between $p$ and $q$ whose weight
(i.e., the sum of all edge weights in it) 
is at most $t \cdot \|p-q\|$.  
Such a path 
is called a \emph{$t$-spanner path}. The problem of constructing
Euclidean spanners 
has been studied intensively
(see, e.g., \cite{Chew86,KG92,ADDJS93,ADMSS95,NS07,DES09tech}).

Euclidean spanners find applications in geometric approximation algorithms, network topology design, distributed systems, and other areas.  
In many applications it is required to construct a
$(1+\eps)$-spanner $G = (P,E)$ that satisfies some useful
properties. First, the spanner should contain $O(n)$ (or nearly
$O(n)$) edges. Second, its {\em weight}\footnote{For convenience, we
will henceforth refer to the normalized notion of weight $\Psi(G) = {{\omega(G)}
\over {\omega(MST(P))}}$, which we call {\em lightness}.} $\omega(G) = \sum_{e
\in E}\omega(e)$ should not be much greater than the weight $\omega(MST(P))$ of
the minimum spanning tree $MST(P)$ of $P$. Third, its
\emph{(hop-)diameter} $\Lambda(G)$ should be small,  i.e., for
every   $p,q \in P$ there should be a path $\Pi$ in $G$
that contains at most $\Lambda(G)$ edges and has weight $\omega(\Pi) = \sum_{e
\in E(\Pi)} \omega(e) \le (1+\eps) \cdot \|p-q\|$. Fourth, its {\em (maximum)
degree}  $deg(G)$ should be small.

In  STOC'95   \cite{ADMSS95}, Arya et al.\ showed that for any set of $n$ points in $\mathbb R^d$ one can build
in $O(n \cdot \log n)$ time a $(1+\eps)$-spanner with constant degree,
diameter $O(\log n)$ and lightness $O(\log^2 n)$. 
They conjectured that one can obtain in the same time
a spanner with constant degree, and logarithmic diameter and lightness.

This conjecture of Arya et al.\ \cite{ADMSS95} was resolved in the affirmative recently by Elkin and Solomon \cite{ES12}.
In fact, the result of \cite{ES12} is more general in two ways. First, it applies to arbitrary \emph{doubling metrics}.\footnote{The \emph{doubling dimension} of a metric is the smallest value $d$
such that every ball $B$ in the metric can be covered by at most $2^{d}$ balls of half the radius of $B$. A metric 
is called \emph{doubling} if its doubling dimension is constant.}
Second, it provides a complete tradeoff between the   involved parameters.
Specifically, Elkin and Solomon \cite{ES12} showed that one can build in $O(n \cdot \log n)$ time a
$(1+\eps)$-spanner with degree $O(\rho)$, diameter $O(\log_\rho n + \alpha(\rho))$ and lightness $O(\rho \cdot \log_\rho n)$,
where $\rho \ge 2$ is an integer parameter and 
$\alpha$ is the inverse Ackermann function.
Due to lower bounds of Dinitz et al.\ \cite{DES09tech}, this tradeoff is tight (up to constant factors) in the entire range.

Later, Chan et al.\ \cite{CLN12} provided a simpler proof for Arya et al.'s conjecture. Moreover, they strengthened their 
construction to be \emph{fault-tolerant} (FT).\footnote{A spanner $G$ for a point set $P$ (or for any metric) is called a \emph{$k$-FT $t$-spanner}
(or simply $k$-FT spanner, if $t$ is clear from the context),
for any $t \ge 1$ and $0 \le k \le n-2$, if 
for any subset $F \subseteq P$ with $|F| \le k$, $G \setminus F$ is a $t$-spanner for $P \setminus F$.}
Specifically, they showed that there exists a $k$-FT $(1+\eps)$-spanner with degree $O(k^2)$,
diameter $O(\log n)$ and lightness $O(k^3 \cdot \log n)$.
The running time of the construction of \cite{CLN12} was not analyzed; we remark that a naive implementation requires quadratic time. 

In this work we improve the results of Chan et al.\ \cite{CLN12}, using a simpler proof.
Specifically, we present a simple proof which shows that a $k$-FT $(1+\eps)$-spanner with degree $O(k^2)$,
diameter $O(\log n)$ and lightness $O(k^2 \cdot \log n)$ can be built in $O(n \cdot (\log n + k^2))$ time, for any integer $0 \le k \le n-2$.
Similarly to the constructions of \cite{ES12} and \cite{CLN12}, our construction applies to arbitrary doubling metrics.
However, in contrast to the construction of Elkin and Solomon \cite{ES12},
our construction fails to provide a complete (and tight) tradeoff between the involved parameters. The construction of Chan et al.\ \cite{CLN12} has this drawback too.

For random point sets in $\mathbb R^d$, where $d \ge 2$ is an integer constant, we ``shave'' a factor of $\log n$ from the lightness bound.
Specifically, in this case our construction provides within the same time $O(n \cdot (\log n + k^2))$, a $k$-FT $(1+\eps)$-spanner with degree $O(k^2)$,
diameter $O(\log n)$ and lightness that is with high probability (shortly, w.h.p.) $O(k^2)$.
(It is assumed throughout the paper that $\eps > 0 $ is an arbitrarily small constant.)
\vspace{-0.02in}
\\
{\bf 1.1 ~Our and Previous Techniques.~}
The starting point of the construction of Elkin and Solomon \cite{ES12} is to ``shortcut''
an MST-tour (or any Hamiltonian path with constant lightness) of the metric,
using a 1-spanner construction from \cite{SE10} that has constant degree and logarithmic diameter and lightness.
Chan et al.\ \cite{CLN12} suggested a similar idea, which involves shortcutting the net-tree of \cite{GGN04,CGMZ05} (instead of the MST-tour)
using a generalized 1-spanner construction for tree metrics from the same paper \cite{SE10}. 
Each of the two approaches has advantages and disadvantages over the other one. 
The MST-tour has constant lightness, but it may blow up distances between points. On the other hand, distances between points
in the net-tree can be controlled, but the lightness of the net-tree may be logarithmic (as in 1-dimensional Euclidean metrics
where the points are uniformly spaced).
In both approaches the main difficulty is to bound the degree of the spanner. The argument of \cite{ES12} is more involved,
mainly because one has to overcome the obstacle that distances between points in the MST-tour may blow up.
On the bright side, since the MST-tour has constant lightness, it is possible to get a complete (and tight) tradeoff 
between all parameters
in the entire range. The argument of \cite{CLN12} is simpler due to the ``nice'' structure of the net-tree. On the negative side, 
since the lightness of the net-tree may be logarithmic, extending the basic tradeoff (constant degree, logarithmic diameter and lightness)
to the entire range is doomed to failure.

We follow the ideas of \cite{ES12} and \cite{CLN12}, but propose a   simpler argument. Instead of shortcutting
the MST-tour or the net-tree, we shortcut the underlying tree of the bounded degree spanner  of Gottlieb and Roditty \cite{GR082}.
More specifically, similarly to the argument of \cite{CLN12}, we shortcut the \emph{light subtrees} of the tree--those with distance
scales less than $\frac{\Delta_{\max}}{n}$, where $\Delta_{\max}$ is the maximum inter-point distance
%aspect ratio\footnote{The
%\emph{aspect ratio} of a metric is defined as the ratio between the maximum and minimum inter-point distance in it.} 
in the metric. Since the spanner of \cite{GR082} has bounded degree,
this  immediately  gives rise to the desired
bounds. Thus our argument bypasses the main difficulties that arose in \cite{ES12} and \cite{CLN12}  on the way to reducing the degree.
Another advantage of our argument is that it easily and naturally extends to provide a FT spanner. On the other hand, 
the argument of Chan et al.\ \cite{CLN12} for extending their basic  spanner  construction to  provide  a FT one is  quite elaborate and non-trivial,
and is based on the ICALP'12 paper \cite{CLN12a} of these authors.

\vspace{-0.01in}
\section{The Basic Spanner Construction}  \label{basic}
\vspace{-0.01in}
In this section we provide a simple proof for the conjecture of Arya et al.\ \cite{ADMSS95}.

Let $M = (P,\delta)$ be an $n$-point doubling metric. 
A $(1+\eps)$-spanner $H$ for $M$
is called a \emph{tree-like spanner}, if it ``contains'' a tree $T$
(referred to as a \emph{tree-skeleton} of $H$)
that satisfies the following conditions:
\begin{enumerate}
\vspace{-0.03in}
\item Each vertex $v$ of $T$ is assigned a representative point $r(v) \in P$.
\\There is a 1-1 correspondence between the points of $P$ and the representatives of the leaves of $T$.
\\Each internal vertex is assigned a unique representative.
(Thus, each point of $P$
will be the representative of at most two vertices of $T$.  
In particular, there are at most $2n$ vertices in $T$.)
\vspace{-0.1in}
\item Each vertex $v$ of $T$ has a \emph{radius} $rad(v)$. The radius of vertices decreases geometrically with the level;
the radius of the root of $T$ is roughly  the maximum inter-point distance $\Delta_{\max}$,
and the radius of leaves is roughly the minimum inter-point distance $\Delta_{\min}$.
Therefore, the depth of $T$ is $O(\log \Delta)$, where $\Delta = \frac{\Delta_{\max}}{\Delta_{\min}}$ is the aspect ratio of $M$.
For any vertex $v$ in $T$, the metric distance between its representative $r(v)$
and the representative of any vertex that belongs to the subtree $T_v$ of $T$ rooted at $v$ is $O(rad(v))$;
hence the metric distance between every pair of representatives from $T_v$ is  $O(rad(v))$.
In particular, the weight of all edges between a vertex $v$ and its children in $T$ is    $O(rad(v))$;
hence the tree distance between every pair of vertices from $T_v$ is  $O(rad(v))$.
(The weight of an edge $(x,y)$ is given by the metric distance  $\delta(r(x),r(y))$ between the respective representatives.)
\vspace{-0.1in}
\item For any two points $p,q \in P$, there is a $(1+\eps)$-spanner path in $H$ between $p$ and $q$
that is composed of three consecutive parts: (a) a path ascending the edges of $T$, from the leaf $u$ whose representative is $p$
to some ancestor $u'$ of $u$ in $T$;
(b) a single \emph{lateral} edge $(u',v')$; (c) a path descending the edges of $T$, from $v'$ to the leaf $v$
whose representative is $q$. (Each edge $(u,v)$ between a pair of vertices of $T$ is translated into an edge $(r(u),r(v))$ in $H$
of weight $\delta(r(u),r(v))$.)
The weight of the ascending (respectively, descending) path is at most $O(rad(u'))$ (resp., $O(rad(v'))$).
The weight $\delta(r(u'),r(v'))$ of the lateral edge is  $\Omega(\frac{1}{\eps}) \cdot (rad(u')+rad(v'))$,
hence it \emph{dwarfs} the weights of the ascending and descending paths
(i.e., it is larger than them by  a factor of $\Omega(\frac{1}{\eps})$).  
\vspace{-0.01in}
\end{enumerate}

Gottlieb and Roditty \cite{GR082} proved the following theorem.  
\vspace{-0.01in}
\begin{theorem} [\cite{GR082}] \label{GR}
For any $n$-point doubling metric $M = (P,\delta)$, and any constant $\eps > 0$, one can
build in $O(n \cdot \log n)$ time a $(1+\eps)$-spanner $H$ 
and a tree-skeleton $T$ for $H$ (as defined above), with $deg(H) = O(1)$.   
\end{theorem}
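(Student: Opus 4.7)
The plan is to construct the tree $T$ by hierarchical nets and then add a small set of ``cross'' edges at each level to obtain the spanner $H$. Concretely, let $L = O(\log \Delta)$ and build a hierarchy $P = N_0 \supseteq N_1 \supseteq \cdots \supseteq N_L$, where each $N_i$ is a $2^i$-net of $M$: any two points of $N_i$ are at distance $\ge 2^i$, and every point of $P$ has some representative in $N_i$ within distance $2^i$. Form the net-tree by assigning each $x \in N_i$ a parent in $N_{i+1}$ at distance $O(2^{i+1})$; take the representative of that level-$i$ vertex to be $x$, and set $rad(v) = \Theta(2^i)$. This immediately gives conditions (1)--(2) of the tree-skeleton definition (after the standard \emph{path compression} step that collapses maximal chains of vertices with a single child, so that each point of $P$ appears as representative of at most two vertices in $T$, keeping $|T| = O(n)$).

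For the spanner edges, at each uncompressed level $i$ I would add, for every $x \in N_i$, an edge to every $y \in N_i$ with $\delta(x,y) \le c\cdot 2^i/\eps$, where $c$ is a sufficiently large absolute constant; these are the candidate lateral edges. The packing property of doubling metrics guarantees each $x \in N_i$ has only $(c/\eps)^{O(d)} = O(1)$ such neighbors per level, and path compression ensures that each point of $P$ participates in only $O(1)$ such ``nontrivial'' levels as a representative, giving $deg(H) = O(1)$. Stretch (condition (3)) follows from the standard argument: given $p,q \in P$ at distance $\delta(p,q)$, let $i$ be the smallest index with $2^i \ge \eps \cdot \delta(p,q)$; then the lowest ancestors $u', v'$ of $p$ and $q$ in $N_i$ satisfy $\delta(r(u'),r(v')) \le \delta(p,q) + O(2^i)$, so $(r(u'),r(v'))$ is a lateral edge, and the ascending/descending tree paths have total weight $O(\sum_{j \le i} 2^j) = O(2^i) = O(\eps) \cdot \delta(p,q)$, which is dwarfed by the lateral edge as required.

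For the $O(n \log n)$ running time, I would piggyback on the net-tree data structures of Har-Peled--Mendel and Cole--Gottlieb: build all nets $N_i$ and their parent pointers in $O(n \log n)$ total time via the hierarchical nearest-neighbor structure, maintaining for each net point at each level the (constant-sized) list of nearby net points, so that the cross-edge lists can be read off in $O(1)$ time per point per level. Path compression is then a single bottom-up pass, and the resulting tree plus the cross edges of uncompressed levels yields $H$ and $T$.

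The main obstacle I expect is coupling the path-compression with the assignment of radii and lateral edges so that condition (3) still holds after long chains have been collapsed: a compressed vertex no longer has a ``true'' ancestor at every intermediate scale, so one must argue that for every $(p,q)$ the appropriate level-$i$ ancestors $u',v'$ do in fact survive compression (essentially because compression only removes levels in which $p$'s or $q$'s net-ancestor coincides with itself, which does not destroy the lateral edge at the relevant scale). Once this is verified, the three tree-skeleton conditions and the stretch/degree/time bounds follow, establishing the theorem.
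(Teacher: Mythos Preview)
The paper does not prove this theorem; it is quoted as a black-box result from \cite{GR082}, so there is no in-paper proof to compare against.

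On its own merits, your sketch correctly identifies the hierarchical-net skeleton and the standard cross-edge plus packing argument for stretch and per-level degree, and you are right that the delicate point is reconciling path compression with the lateral-edge assignment. However, the claim that ``path compression ensures that each point of $P$ participates in only $O(1)$ such nontrivial levels as a representative'' is where the argument breaks. Path compression collapses chains in the \emph{tree}, but lateral edges are attached to a point at every scale at which it is a net point, and a single point $p$ can sit in $N_\ell$ for $\Theta(\log\Delta)$ consecutive levels while having a \emph{distinct} lateral neighbor at each level (take, e.g., points on a line at positions $0,1,2,4,\ldots,2^{n-1}$, with parent pointers chosen so that $0$ is always its own parent; then $0$ is a net point at every scale and picks up a new cross edge at each). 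None of those levels is ``trivial'' in your sense, yet $p$'s tree chain is fully compressed, so $p$ ends up with degree $\Theta(\log\Delta)$. Your proposed fix---drop lateral edges at compressed intermediate levels and argue the surviving endpoints suffice---fails on the same example: the lateral edge available at the top of $p$'s chain lands at a scale comparable to the diameter, so the descending path on the other side has weight comparable to $\delta(p,q)$ rather than $\eps\cdot\delta(p,q)$, and the $(1+\eps)$ stretch is lost.

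Obtaining $deg(H)=O(1)$ is precisely the contribution of \cite{GR082} over the standard net-tree spanner of \cite{GGN04,CGMZ05} (which already gives the tree-skeleton and stretch but may have degree $\Theta(\log\Delta)$). Their mechanism is not path compression alone but a careful reassignment of which concrete point carries each cross edge: roughly, a net point that would otherwise be overloaded delegates its lateral edges to nearby descendants, with a charging argument ensuring no descendant is overloaded in turn. Your outline needs this ingredient, or an equivalent one, to close the gap.
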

\vspace{-0.01in}

The tree $T$ and the corresponding tree-like spanner $H$ of Gottlieb and Roditty \cite{GR082} are similar
to the   net-tree and the corresponding net-tree spanner \cite{GGN04,CGMZ05}.
In particular, the standard analysis of \cite{CGMZ05} (see also \cite{CLN12}) implies that
the lightness of the net-tree spanner is logarithmic. Using the same considerations it can be shown
that the tree-like spanner $H$ of \cite{GR082} has logarithmic lightness as well.

Similarly to the net-tree spanner of \cite{GGN04,CGMZ05}, the tree-like spanner of Gottlieb and Roditty \cite{GR082} may have a large diameter. 
More specifically, the diameter of the spanner is linear in the depth of the underlying tree.
To reduce the diameter, we employ the following tree-shortcutting theorem from \cite{SE10}.
\vspace{-0.01in}
\begin{theorem} [Theorem 3 in \cite{SE10}] \label{SE}
Let $T$ be an arbitrary $n$-vertex tree, and denote by $M_T$ the tree metric induced by $T$.
One can build in 
$O(n \cdot \log_\rho n)$ time, 
for any integer $\rho \ge 4$, a 1-spanner  for $M_T$ with $O(n)$ edges,
degree at most $deg(T) + 2\rho$ and diameter $O(\log_\rho n + \alpha(\rho))$.
\end{theorem}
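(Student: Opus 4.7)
The plan is to construct the $1$-spanner by combining a hierarchical decomposition of $T$ with a classical $1$-spanner construction for path metrics. First, I would apply a heavy-path-style decomposition to $T$ (parametrized by $\rho$), partitioning its vertex set into vertex-disjoint paths $P_1, \ldots, P_k$ with the property that any simple tree-path from $u$ to $v$ crosses $O(\log_\rho n)$ of these paths. All edges of $T$ are retained in the spanner, and for each heavy path $P_i$ of length $m_i$ I would add a $1$-spanner $S_i$ for the induced path metric, using the Ackermann-style recursive construction of Alon--Schieber / Bodlaender--Tel--Santoro / Chazelle: this yields $O(m_i)$ edges, degree $O(\rho)$, and diameter $O(\alpha(\rho))$ for each $S_i$, built in $O(m_i \log_\rho m_i)$ time.

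A straightforward accounting over the union of the tree edges and $\bigcup_i S_i$ then gives a spanner with $O(n)$ edges (since $\sum_i m_i = n$), degree at most $deg(T) + 2\rho$ (each vertex lies on at most one heavy path and receives $O(\rho)$ shortcut edges, with the factor $2$ absorbing endpoint effects at branching points where two heavy paths meet), and total construction time $O(n \log_\rho n)$. The stretch is exactly $1$ because every edge added has weight equal to the corresponding $T$-distance between its endpoints, and a shortest $u$-to-$v$ path in $T$ can be realized by concatenating segments internal to the heavy paths, each faithfully handled by its path spanner.

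The main obstacle is the diameter bound. A naive analysis yields only $O(\log_\rho n \cdot \alpha(\rho))$, one $\alpha(\rho)$ overhead per traversed heavy path, which is too weak. To achieve the claimed $O(\log_\rho n + \alpha(\rho))$, one must show that along any query path, only $O(1)$ of the traversed heavy paths contribute the full Ackermann overhead, while the remaining $O(\log_\rho n)$ transitions between heavy paths cost only $O(1)$ hops each. This in turn requires structuring the Ackermann-style recursion so that it nests compatibly with the heavy-path decomposition: ``inner'' spanner constructions at deeper levels share the skeleton of the ``outer'' ones, and the Ackermann penalty is amortized once across the entire query rather than paid at every level of the tree decomposition. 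This subtle interplay between the heavy-path decomposition and the nested Ackermann construction is, I expect, the technical heart of the proof.
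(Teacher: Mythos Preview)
This theorem is not proved in the paper at all: it is quoted verbatim as ``Theorem~3 in~\cite{SE10}'' and used as a black box in the construction of $H^*$. There is therefore no proof in the present paper for your proposal to be compared against; any comparison would have to be made with the original argument in~\cite{SE10}.

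As for the proposal on its own terms, the high-level plan (a path decomposition of $T$ combined with Ackermann-type shortcutting on each path) is a reasonable skeleton for results of this flavor, and you correctly isolate the crux: a naive composition yields diameter $O(\log_\rho n \cdot \alpha(\rho))$ rather than the claimed $O(\log_\rho n + \alpha(\rho))$. However, two points deserve scrutiny. First, the standard heavy-path decomposition is not parametrized by $\rho$; it gives $O(\log_2 n)$ crossed paths regardless, so you would need to explain concretely what ``heavy-path-style decomposition parametrized by $\rho$'' means and why it yields only $O(\log_\rho n)$ crossings while still producing genuine paths (not higher-degree pieces). Second, your proposed fix for the diameter---nesting the Ackermann recursion so that only $O(1)$ of the traversed paths pay the full $\alpha(\rho)$ penalty---is stated as an expectation rather than an argument; this is exactly the nontrivial part, and without a concrete mechanism (e.g., aligning the recursion with a global hierarchy on $T$ rather than building the path spanners independently) the claimed additive bound remains unjustified. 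In short, the outline is plausible but the two load-bearing steps are precisely the ones left unfilled.
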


Next, we describe a simple spanner construction $H^*$, which coincides with Arya et al.'s conjecture.

We start by building the spanner $H$ and its tree-skeleton $T$ that are guaranteed by Theorem \ref{GR}.
Note that $T$ contains at most $2n = O(n)$ vertices. Next, following an idea of \cite{CLN12},
we shortcut all the \emph{light subtrees} of $T$, i.e., the subtrees of $T$ with distance scales less than $\frac{\Delta_{\max}}{n}$.
More specifically, denote the light subtrees of $T$ by $T_1,\ldots,T_\ell$.
For every subtree $T_i$, we 
employ Theorem \ref{SE} in the particular case $\rho = O(1)$
to build a 1-spanner $G_i$ for the tree metric $M_{T_i}$ induced by $T_i$ with bounded degree and logarithmic diameter.
Notice that the edge weights of $G_i$ 
are assigned according to the distance function of $M_{T_i}$.
The 1-spanner $G_i$ is then converted into a graph $G^*_i$
over the point set $P$, by replacing
each edge $(u,v)$ of $G_i$ (where $u$ and $v$ are vertices of $T_i$)
with the edge $(r(u),r(v))$ between their respective representatives.
Finally, let $H^*$ be the spanner obtained from 
the union of the $\ell+1$ graphs $H, G^*_1,\ldots,G^*_\ell$.

Denote by $n_i$ the number of vertices in the subtree $T_i$, for each $1 \le i \le \ell$.
We have $\sum_{i=1}^\ell n_i \le 2n$.

By Theorems \ref{GR} and \ref{SE}, the time needed to build $H^*$ is $O(n \cdot \log n) + \sum_{i=1}^\ell O(n_i \cdot \log n_i)=O(n \cdot \log n)$.

By Theorem \ref{GR}, $deg(T) \le deg(H)  = O(1)$. Also, theorem \ref{SE} yields $deg(G_i) \le deg(T_i) + 2\rho
\le deg(T) + O(1) = O(1)$, for each $1 \le i \le \ell$. Since each point of $P$ is assigned as the representative of at most two vertices of $T$,
 we have $deg(H^*) \le deg(H) + 2 \cdot \max\{deg(G_i) ~\vert~ 1 \le i \le \ell\} = O(1)$.

The lightness of $H$ is $O(\log n)$. 
For each subtree $T_i$ and any pair $u,v$ of vertices in $T_i$, the metric distance between $r(u)$ and $r(v)$
is at most linear in the distance scale $O(\frac{\Delta_{\max}}{n})$ of $T_i$. Thus the weight of each edge of $G^*_i$ 
is $O(\frac{\Delta_{\max}}{n})$. The total weight of all graphs $G^*_1, \ldots,G^*_\ell$
(there are overall $O(n)$ edges in these graphs) is thus $O(n \cdot (\frac{\Delta_{\max}}{n})) = O(\Delta_{\max}) = O(\omega(MST(M)))$. Hence the lightness of $H^*$ is $O(\log n)$.

Finally, we show that $H^*$ is a $(1+\eps)$-spanner for $M$ with diameter $O(\log n)$.
Consider any pair $p,q \in P$ of points, and let $u,v$ be their
respective
leaves in $T$.  
Since $T$ is a tree-skeleton of $H$, there is a $(1+\eps)$-spanner
path  in $H$ between $p$ and $q$ that is composed of three consecutive parts: (a) a path ascending the edges of $T$ from $u$ to 
 $u'$; (b) a lateral edge $(u',v')$; (c) a path  descending the edges of $T$ from $v'$ to $v$.
Let $T_i$ and $T_j$ be the light subtrees to which $u$ and $v$ belong, respectively. (It is possible that $T_i = T_j$.)
Let $\tilde u$ (respectively, $\tilde v$) be the last vertex that belongs to $T_i$ (resp., $T_j$) on the path in $T$ from $u$ to $u'$ (resp., $v$ to $v'$). (It is possible that $\tilde u = u$ and/or $\tilde u = u'$. Similarly, it is possible that $\tilde v = v$ and/or $\tilde v = v'$.) 
We use the graph $G^*_i$ (respectively, $G^*_j$) to shortcut the path from $u$ to $\tilde u$
(resp., $v$ to $\tilde v$); the resulting path has the same (or smaller) weight, but   only $O(\log n)$ edges.
Also, the path from $\tilde u$ to $u'$ (respectively, $\tilde v$ to $v'$) has only $O(\log n)$ edges as well.
It follows that there is a $(1+\eps)$-spanner path between $p$ and $q$ in $H^*$ that has at most $O(\log n)$ edges.

\vspace{-0.05in}
\begin{theorem} \label{thm1}
For any $n$-point doubling metric $M = (P,\delta)$, and any constant $\eps > 0$, 
a $(1+\eps)$-spanner $H^*$ with constant degree, and diameter and lightness $O(\log n)$
can be built in $O(n \cdot \log n)$ time.
\end{theorem}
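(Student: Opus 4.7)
The plan is to verify that the construction $H^*$ described above simultaneously satisfies the claimed bounds on stretch, degree, lightness, diameter, and running time. Since the construction is essentially assembled from the two black boxes of Theorem \ref{GR} and Theorem \ref{SE}, most of the work is bookkeeping; the only delicate step is making sure the shortcut edges, whose weights are set as metric distances between representatives, behave properly with respect to the tree-metric 1-spanner guarantee.

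First I would dispatch the easy parameters. Running time: Theorem \ref{GR} produces $H$ and $T$ in $O(n\log n)$, and each invocation of Theorem \ref{SE} with $\rho = O(1)$ on the light subtree $T_i$ costs $O(n_i \log n_i)$; using $\sum_i n_i \le 2n$ one gets an $O(n \log n)$ total. Degree: Theorem \ref{GR} gives $deg(T) \le deg(H) = O(1)$, so $deg(T_i) = O(1)$ and Theorem \ref{SE} yields $deg(G_i) = O(1)$; mapping back through the representative function $r$ can at most double degrees (each point represents at most two vertices of $T$), so $deg(H^*) = O(1)$.

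For lightness I would separately bound $\omega(H)$ and the contribution of the $G_i^*$'s. The former is $O(\log n)\cdot \omega(MST(M))$ by the standard net-tree packing argument of \cite{CGMZ05}, which carries over to the tree-skeleton of \cite{GR082}. For the latter, the radius of $T_i$ is at most the threshold $\Delta_{\max}/n$ defining a light subtree, so the representatives of any two of its vertices are at metric distance $O(\Delta_{\max}/n)$; each edge of $G_i^*$ therefore weighs $O(\Delta_{\max}/n)$, and the $\bigcup_i G_i^*$'s together contain $O(n)$ edges, contributing total weight $O(\Delta_{\max}) = O(\omega(MST(M)))$.

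The last and main step is stretch and diameter. Given $p,q \in P$ with tree leaves $u,v$ and the ascend-lateral-descend $(1+\eps)$-path $u \to u' \to v' \to v$ guaranteed by the tree-skeleton property of $H$, let $T_i, T_j$ be the light subtrees containing $u,v$ and $\tilde u, \tilde v$ the last vertices on the two climbs that still belong to $T_i, T_j$. Replace the $T_i$-path $u \to \tilde u$ by the corresponding $G_i^*$-path, and likewise for $T_j$; this costs $O(\log n)$ edges per side by the diameter bound of Theorem \ref{SE}. Above the scale $\Delta_{\max}/n$, only $O(\log n)$ tree levels remain because scales decrease geometrically, so the remaining climb from $\tilde u$ to $u'$ (and symmetrically) is also $O(\log n)$ edges. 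Adding the lateral edge gives $\Lambda(H^*) = O(\log n)$. The stretch is preserved because each shortcut edge $(r(x),r(y))$ has weight $\delta(r(x),r(y))$, which by the triangle inequality is at most the sum of weights along the $T_i$-path from $x$ to $y$; hence the new path's total weight does not exceed that of the original ascend-lateral-descend path, which is already within $(1+\eps)$ of $\delta(p,q)$. The main technical subtlety is precisely this last point: Theorem \ref{SE} guarantees the 1-spanner property in the tree metric $M_{T_i}$, not in $M$, so one must explicitly invoke the fact that every $T$-edge has weight equal to the metric distance between its endpoints' representatives in order to transfer the guarantee to $H^*$.
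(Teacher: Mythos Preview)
Your proposal is correct and follows essentially the same approach as the paper: build $H$ and $T$ via Theorem~\ref{GR}, shortcut the light subtrees via Theorem~\ref{SE}, and verify running time, degree, lightness, and diameter/stretch in turn, with the same bookkeeping arguments. You are in fact slightly more explicit than the paper about the one subtle point---that the 1-spanner guarantee of Theorem~\ref{SE} is in the tree metric $M_{T_i}$, and one needs the triangle inequality (together with the fact that $T$-edge weights are metric distances) to conclude that the $G_i^*$-path weight in $M$ does not exceed the original tree-path weight---whereas the paper simply asserts the resulting path ``has the same (or smaller) weight.''
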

\vspace{-0.15in}
\section{The Fault-Tolerant Spanner Construction} \label{FT}
\vspace{-0.07in}
In this section we present a construction $H^*_{FT}$ of FT spanners
that is obtained by a simple modification of the basic spanner construction $H^*$ from the previous section.

To get a $k$-FT spanner, we will assign at most $2k+1$ representatives (instead of a single representative  as in the basic construction)
to any vertex $v$ of the tree-skeleton $T$. We want these representatives to be close (in terms of  metric distance) to the original representative $r(v)$ of $v$ in the basic construction.
Since the radii of vertices in $T$ decrease geometrically with the level, the original representative $r(x)$ of any descendant $x$ of $v$ in $T$ is 
  close to $r(v)$, and so it may serve as a representative of $v$ in the $k$-FT spanner.

Denote by $D(v)$ the set of all descendants of a vertex $v$ in $T$ (including $v$ itself).
Let $D^*(v)$ be a subset of $2k+1$ vertices from $D(v)$ with smallest level in $T$ (i.e., those of shortest hop-distance from $v$ in $T$);
if $|D(v)| \le 2k+1$, we have $D^*(v) = D(v)$.  
Let $R^*(v)$ be the set of   original representatives of the vertices in $D^*(v)$, i.e., $R^*(v) = \{r(x) ~\vert~ x \in D^*(v)\}$;
note that $|R^*(v)| \le |D^*(v)| \le 2k+1$.
Since each point of $P$ is assigned as the representative of at most two vertices of $T$, we have $|R^*(v)| \ge \left\lceil \frac{|D^*(v)|}{2} \right\rceil$.
Hence, either $R^*(v)$ contains the representatives of all the vertices in the subtree $T_v$ of $T$ rooted at $v$ or it must hold that $|D^*(v)| = 2k+1$,
and in the latter case we have $|R^*(v)| \ge k+1$.
Observe that each vertex in $T$ belongs to sets $D^*(w)$ of at most $2k+1$ vertices $w$ in $T$,
which implies that each point of $P$  belongs to sets $R^*(w)$ of at most $4k+2$ vertices $w$ in $T$.

The FT spanner construction $H^*_{FT}$ is obtained from the basic construction $H^*$ in the following way:
for each edge $(r(u),r(v))$ in the basic spanner construction $H^*$ (which is associated with the edge $(u,v)$ between the respective
vertices of $T$),
our FT spanner construction $H^*_{FT}$
will contain a complete bipartite graph between $R^*(u)$ and $R^*(v)$.

It is easy to see that, given the tree-skeleton $T$ and the basic spanner construction $H^*$, the FT spanner construction $H^*_{FT}$
can be built in $O(k^2 \cdot n)$ time.    
The overall running time is therefore $O(n  \cdot (\log n + k^2))$.

Consider an arbitrary point $p \in P$, and let $w$ be a vertex in $T$ such that $p \in R^*(w)$.
For every edge $(w,x)$ (associated with an edge $(r(w),r(x))$ of $H^*$) that is incident on $w$ in the basic construction,
$p$ is connected to all $O(k)$ points of $R^*(x)$ in $H^*_{FT}$.
Since $deg(H^*) = O(1)$, this contributes $O(k)$ units to the degree of $p$.
Since $p$ may belong to sets $R^*(w)$ of at most  $O(k)$ vertices $w$ in $T$, 
it follows that  the degree of $p$ in $H^*_{FT}$ is at most $O(k^2)$. Hence $deg(H^*_{FT}) = O(k^2)$.

Observe that each edge $(r(u),r(v))$ of $H^*$ is replaced by $O(k^2)$ edges of roughly the same weight in   $H^*_{FT}$.
More specifically, the weight of each of these $O(k^2)$ 
edges is greater than the weight $\delta(r(u),r(v))$ of the original edge $(r(u),r(v))$ by an additive factor of
$O(rad(u) + rad(v))$.
Since the basic construction $H^*$ has constant degree, it follows that $\omega(H^*_{FT}) \le O(k^2) \cdot (\omega(H^*) + \sum_{v \in T} rad(v))$. 
Using the standard analysis of \cite{CGMZ05,CLN12} 
we conclude that the lightness of the FT spanner construction $H^*_{FT}$ is $O(k^2 \cdot \log n)$. 

Finally, we show that for any pair $p,q \in P \setminus F$ of functioning  points (where $F \subseteq P$ is an arbitrary set of non-functioning points with $|F| \le k$), there is a $(1+O(\eps))$-spanner path
in $H^*_{FT} \setminus F$ with $O(\log n)$ edges.   
Let $u$ and $v$ be the respective leaves of $p$ and $q$ in $T$, and consider the $(1+\eps)$-spanner path $\Pi$ between $p$ and $q$ in $H^*$.
It has $O(\log n)$ edges, and it is composed of three consecutive parts: (a) a path $\Pi_u = (u = u_0,\ldots,u_i = u')$ from $u = u_0$ to 
 $u_i = u'$, such that $u_{i'}$ is an ancestor of $u_{i'-1}$ in $T$, for each $1 \le i' \le i$; 
 (b) a lateral edge $(u',v')$;
 (c) a path $\Pi_v = (v' = v_0,\ldots,v_j = v)$ from $v' = v_0$ to $v_j = v$, such that $v_{j'}$ is a descendant of $v_{j'-1}$ in $T$,
 for each $1 \le j' \le j$.
Note that $p \in R^*(u)$.
More generally, for each vertex $u_{i'}$ of $\Pi_u$, either $p \in R^*(u_{i'})$ or $|R^*(u_{i'})| \ge k+1$.
Thus $R^*(u_{i'})$ contains at least one functioning point, for each $0 \le i' \le i$.
Take $f(u_{i'})$ to be $p$ whenever possible, i.e., for every $0 \le i' \le i$ such that $p \in R^*(u_{i'})$;
otherwise take $f(u_{i'})$ to be an arbitrary functioning point from $R^*(u_{i'})$.  
(Note that $f(u) = p$.)
Similarly, $R^*(v_{j'})$ contains at least one functioning point, for each $0 \le j' \le j$.
Take $f(v_{j'})$ to be $q$ whenever possible; otherwise take $f(v_{j'})$ to be an arbitrary functioning point from $R^*(v_{j'})$.
(Note that $f(v) = q$.)
By replacing each vertex $x$ of $\Pi$ with the respective functioning point $f(x)$, we obtain a path $\Pi_{FT}$ in $H^*_{FT} \setminus F$
between $f(u) = p$ and $f(v) = q$
with the same number $O(\log n)$ of edges.
Since the path $\Pi_{FT}$ is a perturbation of the original $(1+\eps)$-spanner path $\Pi$ (and the radii of vertices decrease geometrically with the level), it incurs only a small additional cost of 
$O(rad(u') + rad(v')) = O(\eps) \cdot \delta(r(u'),r(v')) = O(\eps) \cdot \delta(p,q)$.
In other words, the stretch increases by an additive factor of $O(\eps)$, from $1+\eps$ to $1+O(\eps)$. One can   reduce the stretch back to $1+\eps$, at the expense of increasing 
the other parameters
by some function of $\eps$.
\begin{theorem} \label{thm2}
For any $n$-point doubling metric $M = (P,\delta)$, any constant $\eps > 0$, and any integer $0 \le k \le n-2$, 
a $k$-FT $(1+\eps)$-spanner $H^*_{FT}$ with degree $O(k^2)$, diameter $O(\log n)$ and lightness $O(k^2 \cdot \log n)$
can be built in $O(n \cdot (\log n + k^2))$ time.
\end{theorem}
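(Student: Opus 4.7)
My plan is to obtain $H^*_{FT}$ by a local ``fattening'' of the basic tree-like spanner $H^*$ from Theorem \ref{thm1}. Concretely, I will keep the tree-skeleton $T$ (of constant degree) produced by Theorem \ref{GR}, and for every vertex $v$ of $T$ promote $r(v)$ to a whole set of up to $2k+1$ alternative representatives. Following the excerpt, I define $D^*(v)$ to be the $2k+1$ descendants of $v$ in $T$ of smallest level (or all of $D(v)$, if fewer), and $R^*(v) = \{r(x) : x \in D^*(v)\}$. The crucial quantitative facts I will lean on are: (a) since each point is the original representative of at most two vertices of $T$, we have $|R^*(v)| \ge \lceil |D^*(v)|/2 \rceil$, so $|R^*(v)|\ge k+1$ whenever $|D^*(v)|=2k+1$; and (b) each point of $P$ lies in at most $O(k)$ sets $R^*(w)$. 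The FT spanner $H^*_{FT}$ is then defined by replacing every edge $(r(u),r(v))$ of $H^*$ with the complete bipartite graph between $R^*(u)$ and $R^*(v)$.

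Given $T$ and $H^*$ (which cost $O(n\log n)$ by Theorem \ref{thm1}), the sets $D^*(v)$ can be computed by one bottom-up pass in $O(kn)$ time, and each of the $O(n)$ edges of $H^*$ spawns $O(k^2)$ FT-edges, so the total build time is $O(n(\log n + k^2))$. For degree, each point $p$ lies in $O(k)$ sets $R^*(w)$, and each such $w$ has constant degree in $T$, contributing $O(k)$ new neighbors per incident edge; thus $\deg(H^*_{FT}) = O(k^2)$. For lightness, each edge $(r(u),r(v))$ of $H^*$ is replaced by $O(k^2)$ edges whose weights exceed $\delta(r(u),r(v))$ by at most $O(\mathrm{rad}(u)+\mathrm{rad}(v))$; since $H^*$ has $O(n)$ edges and constant degree, a standard packing-style summation over radii (as in \cite{CGMZ05,CLN12}) gives $\omega(H^*_{FT}) = O(k^2)\cdot(\omega(H^*) + \sum_{v\in T}\mathrm{rad}(v))$, hence lightness $O(k^2\log n)$.

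The interesting part is the simultaneous proof of fault-tolerance, stretch, and logarithmic diameter. Fix any $F\subseteq P$ with $|F|\le k$ and any $p,q\in P\setminus F$, and let $\Pi$ be the canonical $(1+\eps)$-spanner path in $H^*$ between $p$ and $q$, with its three-part ascending/lateral/descending structure and its $O(\log n)$ edges. For every tree vertex $x$ on $\Pi$, either $R^*(x)$ already contains the endpoint ($p$ on the ascending side, $q$ on the descending side), or $|D^*(x)| = 2k+1$ and hence $|R^*(x)|\ge k+1 > |F|$; in either case I can pick a functioning point $f(x)\in R^*(x)\setminus F$, choosing $p$ or $q$ whenever available so that $f(u)=p$ and $f(v)=q$ at the endpoints. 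Replacing each $x$ along $\Pi$ with $f(x)$ yields a path of the same length in $H^*_{FT}\setminus F$.

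The main obstacle is then controlling the stretch of this perturbed path. I will use that $\delta(f(x),r(x)) = O(\mathrm{rad}(x))$ and that radii of vertices on the ascending/descending legs decrease geometrically toward the leaves, so the total perturbation is $O(\mathrm{rad}(u')+\mathrm{rad}(v'))$; since the lateral edge dwarfs these by a $\Omega(1/\eps)$ factor, this is $O(\eps)\cdot \delta(r(u'),r(v')) = O(\eps)\cdot\delta(p,q)$. Hence $\Pi_{FT}$ is a $(1+O(\eps))$-spanner path of $O(\log n)$ edges, and a routine rescaling of $\eps$ restores the stated stretch $1+\eps$ without affecting the other asymptotic bounds. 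Together these five checks yield Theorem \ref{thm2}.
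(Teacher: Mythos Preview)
Your proposal is correct and follows essentially the same argument as the paper: the same definition of $D^*(v)$ and $R^*(v)$, the same bipartite fattening of every edge of $H^*$, and the same perturbed-path analysis for fault tolerance, stretch, and diameter. The only minor slip is that in the degree bound you should appeal to $\deg(H^*)=O(1)$ rather than to the tree degree of $w$, since the fattened edges include lateral and shortcut edges of $H^*$ and not just tree edges; this does not affect the conclusion.
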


\section{Random Point Sets}
\vspace{-0.02in}
In this section we improve the lightness bound of Theorem  \ref{thm2} for random point sets in $\mathbb R^d$.
This improvement does not apply to the case $d = 1$, where we show that the lightness must be logarithmic.
%\vspace{-0.05in}
%\subsection{Constant-Dimensional Random Point Sets}
%\vspace{-0.05in}
\vspace{0.12in}
\\
{\bf 4.1 ~Constant-Dimensional Random Point Sets.~}
Consider a set $V$ of $n$ points that are chosen independently and uniformly at random from the unit $d$-dimensional
cube, for an integer constant $d \ge 2$.
We will show that the net-tree spanner of \cite{GGN04,CGMZ05} for $V$ has w.h.p.\ constant lightness. Using the same considerations
it can be shown that the tree-like spanner $H$ of \cite{GR082} for $V$ has w.h.p.\ constant lightness as well.
Having reduced the lightness bound of $H$ from $O(\log n)$ to $O(1)$, we repeat the arguments of Sections \ref{basic} and \ref{FT}.
%in conjunction with the improved bound on $H$, 
As an immediate consequence, the lightness bound of the basic spanner construction $H^*$ from Section \ref{basic} is reduced  from
$O(\log n)$ to $O(1)$, and the lightness bound of the FT spanner construction $H^*_{FT}$ from Section \ref{FT}
is reduced  from $O(k^2 \cdot \log n)$ to $O(k^2)$.

We restrict attention to the case $d = 2$. However, our argument   extends to higher constant dimensions.

The following lemma from \cite{NS07} provides a probabilistic lower bound on the weight of $MST(V)$. 
% following statement.
\begin{lemma} [Lemma 15.1.6 in \cite{NS07}]
For a set $V$ of $n$ points that are chosen independently and uniformly at random from 
the unit square, it holds w.h.p.\ that $\omega(MST(V)) = \Omega(\sqrt{n})$. 
%there are constants $c > 0$ and $0 < \rho < 1$, such that
%\\$Pr(\omega(MST(V)) < c \cdot \sqrt{n}) \le \rho^n$. 
\end{lemma}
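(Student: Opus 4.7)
The plan is to lower bound the MST weight by a sum of pointwise nearest-neighbor distances, and then show that a constant fraction of the points have nearest-neighbor distance $\Omega(1/\sqrt{n})$ with high probability. First, I would invoke the classical cut-property fact that for any point $v \in V$ in general position, the edge from $v$ to its nearest neighbor $\mathrm{NN}(v)$ in $V \setminus \{v\}$ lies in every MST; since each edge is charged at most twice in this way (once per endpoint), this yields $\omega(MST(V)) \geq \frac{1}{2} \sum_{v \in V} \delta(v,\mathrm{NN}(v))$, reducing the problem to a lower bound on the sum of nearest-neighbor distances. Note that general position holds almost surely under the continuous uniform distribution.

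Next, I would partition the unit square into a regular axis-aligned grid of $\Theta(n)$ cells of side length $s = \Theta(1/\sqrt{n})$, choosing the hidden constant large enough that each cell has constant expected occupancy. Call a cell \emph{good} if it contains exactly one sample point and each of its (at most) eight King-neighbor cells is empty; the unique point in a good cell then has nearest-neighbor distance at least $s$. A direct binomial calculation shows that a fixed cell is good with probability $\Theta(1)$, so the expected number of good cells is $\Theta(n)$; it thus suffices to establish a matching high-probability lower bound on this count.

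For concentration, let $f$ denote the number of good cells, viewed as a function of the $n$ i.i.d.\ uniform samples. Resampling a single point can alter the good/bad status of only the two cells containing its old and new positions together with their $O(1)$ King-neighbors, so $f$ has bounded differences with constant $O(1)$. McDiarmid's inequality then gives $\Pr[f < \frac{1}{2}\mathbb{E}[f]] \leq e^{-\Omega(n)}$, so w.h.p.\ there are $\Omega(n)$ good cells, and hence $\sum_{v \in V} \delta(v,\mathrm{NN}(v)) \geq \Omega(n) \cdot s = \Omega(\sqrt{n})$. Combined with the first step, this yields $\omega(MST(V)) = \Omega(\sqrt{n})$ w.h.p., as required.

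The main obstacle is the concentration step: the events ``$v$ has large nearest-neighbor distance'', indexed over $v \in V$, are not mutually independent, so a naive Chernoff-type bound does not directly apply. Recasting the problem in terms of the occupancy of disjoint grid cells converts the count of interest into a function of i.i.d.\ samples whose dependence on each coordinate is purely local, which is exactly the regime where McDiarmid delivers an exponentially small deviation bound. Everything else is a routine binomial estimate.
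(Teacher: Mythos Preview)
The paper does not give its own proof of this lemma; it merely quotes it as Lemma~15.1.6 of \cite{NS07} and uses it as a black box. So there is nothing in the paper to compare your argument against beyond the attribution.

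Your proof proposal is correct and is in fact the standard textbook route. The three ingredients---(i) the cut-property observation that every nearest-neighbor edge lies in the MST, yielding $\omega(MST(V)) \ge \tfrac12\sum_{v} \delta(v,\mathrm{NN}(v))$; (ii) the $\Theta(1/\sqrt{n})$-grid argument showing that each cell is ``good'' (singly occupied with empty King-neighbors) with constant probability, so $\mathbb{E}[f] = \Theta(n)$; (iii) bounded-differences concentration via McDiarmid, since resampling one point affects the good/bad status of only $O(1)$ cells---are all valid and fit together exactly as you describe. The only point worth making explicit in a full write-up is the geometric check that a point in a good cell really has nearest-neighbor distance at least $s$ (any other point must lie in a cell at King-distance $\ge 2$, hence at Euclidean distance $\ge s$), but you have clearly accounted for this. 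This is essentially the argument in Narasimhan--Smid's book, so your approach matches the cited source even though the present paper omits the proof.
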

%The next statement 
%To conclude the argument, we prove 

The following lemma 
%concludes our argument.
shows that the net-tree spanner for $V$ has w.h.p.\ constant lightness.
%is 
%$O(\sqrt{n})$. 
\begin{lemma} \label{maini}
For \emph{any} set $S$ of $n$ points in the unit square, the net-tree spanner has weight
%weight of well-separated
%pair constructions is 
$O(\sqrt{n})$.
\end{lemma}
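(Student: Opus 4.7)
\Proof
The plan is a level-by-level accounting of the edge weights in the net-tree spanner, combining the packing property of nets with a standard ``crossover'' between a packing bound on $|N_i|$ and the trivial bound $|N_i| \le n$.

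First, I would recall that the net-tree spanner is built over a hierarchy of nets $\{N_i\}$, where $N_i \subseteq S$ has scale roughly $2^i$: any two points of $N_i$ are at distance at least $2^i$, and every point of $S$ lies within distance $O(2^i)$ of some point of $N_i$. The spanner edges at level $i$ connect pairs $u,v \in N_i$ with $\delta(u,v) = O(2^i)$, where the hidden constant depends on $\eps$. Because the points lie in $\mathbb R^2$ (constant doubling dimension), each $u \in N_i$ has $O(1)$ such neighbors and each of the corresponding edges has weight $O(2^i)$; hence the total weight contributed by level $i$ is $O(|N_i| \cdot 2^i)$.

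Next, I would bound $|N_i|$. Since $S$ lies in the unit square and points of $N_i$ are pairwise at distance at least $2^i$, a standard area/packing argument yields $|N_i| = O(2^{-2i})$ for $2^i \le 1$; the handful of levels with $2^i > 1$ each contain $O(1)$ net points and together contribute only $O(1)$ to the weight. Combined with $|N_i| \le n$, one obtains $|N_i| \le c \cdot \min\{n,\, 2^{-2i}\}$ for an absolute constant $c$, so the weight at level $i$ is at most $O(\min\{n,\, 2^{-2i}\} \cdot 2^i)$.

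Finally, I would sum geometrically around the crossover level $i^*$ defined by $n = 2^{-2i^*}$, i.e.\ $2^{i^*} = \Theta(1/\sqrt n)$. For levels $i \ge i^*$, the contribution is $O(2^{-2i} \cdot 2^i) = O(2^{-i})$, and the geometric sum is dominated by its largest term, yielding $O(2^{-i^*}) = O(\sqrt n)$. For levels $i < i^*$, the contribution is $O(n \cdot 2^i)$, and the geometric sum is again dominated by its largest term, yielding $O(n \cdot 2^{i^*}) = O(\sqrt n)$. Together the two ranges give the desired $O(\sqrt n)$ bound. The main obstacle is merely to verify that the per-level bound $O(|N_i|\cdot 2^i)$ holds for the specific net-tree spanner of \cite{GGN04,CGMZ05}; this is the standard doubling-dimension packing argument and requires no new idea. \qed
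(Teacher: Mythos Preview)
Your proposal is correct and follows essentially the same argument as the paper: both bound the per-level weight by $O(|N_i|\cdot r_i)$, use the packing bound $|N_i|=O(1/r_i^2)$ together with the trivial bound $|N_i|\le n$, split at the crossover scale $r_i\approx 1/\sqrt n$, and sum two geometric series each contributing $O(\sqrt n)$. The only cosmetic difference is the indexing of scales ($2^i$ versus $r_i = 2^i r_0$).
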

\begin{proof}
%Consider the nodes of the net-tree (net points) in increasing order of radius.
In level 0 (the bottom level) of the net-tree, the nodes (called \emph{net points}) have radius $r_0 \approx \frac{\Delta_{\min}}{2}$.
% where $\Delta_{\min}$ is the minimum inter-point distance in the metric. 
More generally, in level $i = 0,\ldots,i^*$, where $i^* \approx \log \Delta$,
%= \lceil \log (\frac{\Delta_{\max}}{\Delta_{\min}}) \rceil$,
the net points have radius $r_i = 2^i \cdot r_0$.
Denote by $n_i$ the number of net points in level $i$. The sequence $n_0,\ldots,n_{i^*}$ is monotone non-increasing,
with $n_0 \le n$ and $n_{i^*} = 1$. Since the distance between any pair of $i$-level net points is at least $r_i$,
there is at most one such point within any circle of area smaller than $\pi \cdot \left(\frac{r_i}{2}\right)^2$.
It follows that $n_i = O(\frac{1}{r_i^2})$, for all $i = 0,\ldots,i^*$.

The standard analysis of \cite{GGN04,CGMZ05} shows that at each level, each net point has only a constant number of neighbors
at that level. The weight of all edges that are incident on a net point is greater than its radius
by at most a constant factor. Hence, it suffices to bound
the sum of radii of all net points.

Let $j$ be the last index in $\{0,\ldots,i^*\}$ that satisfies $r_{j} < \frac{1}{\sqrt{n}}$. If $r_0 \ge \frac{1}{\sqrt{n}}$, set $j = -1$.
%Since for all indices $i$, $n_{i} < n$, 
The sum of radii of all net points in levels $i = 0,\ldots,j$ is given by 
$\sum_{i=0}^{j} n_{i} \cdot r_{i} ~\le~ \sum_{i=0}^{j} n \cdot \frac{r_{j}}{2^i}
~\le~ \sum_{i=0}^{j} \frac{n}{2^i \cdot \sqrt{n}} ~=~ O(\sqrt{n}).$
Assuming $j \le i^* - 1$, we have $r_{j+1} \ge  \frac{1}{\sqrt{n}}$.  % = 2 \cdot r_j < \frac{2}{\sqrt{n}}$.
The sum of radii of all net points in levels $i = j+1,\ldots,i^*$ is given by 
$\sum_{i=j+1}^{i^*} n_{i} \cdot r_{i} 
%= \sum_{i=0}^{j} O(\frac{1}{r_i^2}) \cdot r_i
~=~ \sum_{i=j+1}^{i^*} O(\frac{1}{r_i}) ~=~ \sum_{i=0}^{i^*-j-1} O(\frac{1}{2^i \cdot r_{j+1}}) ~=~ O(\sqrt{n})$.
%and we are done.
The lemma follows.
%\cdot \frac{r_{j}}{2^i}
%= \sum_{i=0}^{j} \frac{n}{2^i \cdot \sqrt{n}} \le 2 \cdot \sqrt{n}.$
%First, we bound the sum of radii of all nodes with radius at most $\sqrt{n}$. 
\QED
\end{proof}
\begin{theorem} \label{thm3}
For any set $V$ of $n$ points that are chosen independently and uniformly at random from the unit $d$-dimensional
cube, where $d \ge 2$ is an integer constant,
%where $2 \le d = O(1)$,
any constant $\eps > 0$, and any integer $0 \le k \le n-2$, 
a $k$-FT $(1+\eps)$-spanner ${H}^{*}_{FT}$ with degree $O(k^2)$, diameter $O(\log n)$ and lightness w.h.p.\ $O(k^2)$
can be built in $O(n \cdot (\log n + k^2))$ time.
\end{theorem}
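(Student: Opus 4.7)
The plan is to lift the $d=2$ argument to general constant $d\ge 2$ and then reuse Theorem \ref{thm2} essentially verbatim. Concretely, I would (i) generalize Lemma \ref{maini} so that for \emph{any} set of $n$ points in the unit $d$-cube, the total weight of the net-tree spanner is $O(n^{(d-1)/d})$; (ii) recall/invoke the standard probabilistic lower bound $\omega(MST(V)) = \Omega(n^{(d-1)/d})$ w.h.p.\ for $n$ uniform points in the unit $d$-cube (the analogue of Lemma~15.1.6 in \cite{NS07}); (iii) conclude that the tree-like spanner $H$ of Gottlieb and Roditty \cite{GR082} has w.h.p.\ \emph{constant} lightness; and (iv) push this through the constructions of Sections \ref{basic} and \ref{FT} unchanged.

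For step (i), I would repeat the computation in the proof of Lemma \ref{maini}, replacing the planar area bound by a $d$-dimensional volume bound: since $i$-level net points are $r_i$-separated in $[0,1]^d$, there are $n_i = O(1/r_i^d)$ of them. Splitting at the threshold index $j$ with $r_j < n^{-1/d} \le r_{j+1}$, the low levels contribute $\sum_{i=0}^{j} n_i \, r_i \le \sum_{i=0}^{j} n \cdot r_j / 2^{j-i} = O(n \cdot n^{-1/d}) = O(n^{(d-1)/d})$, while the high levels contribute $\sum_{i=j+1}^{i^{*}} n_i \, r_i = \sum_{i=j+1}^{i^{*}} O(1/r_i^{d-1}) = O(1/r_{j+1}^{d-1}) = O(n^{(d-1)/d})$, because the geometric series is dominated by its smallest term. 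As in the planar case, the standard packing analysis of \cite{GGN04,CGMZ05} bounds the degree of each net point at its own level by a constant, so the total edge weight of the net-tree spanner is within a constant factor of $\sum_v rad(v) = O(n^{(d-1)/d})$. The same argument applies to the tree-skeleton $T$ of \cite{GR082}, because its radii/level structure mirrors that of the net-tree.

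Once $H$ has constant lightness w.h.p., step (iv) is immediate: in Section \ref{basic} the light-subtree shortcuts $G_1^*,\ldots,G_\ell^*$ contribute only $O(\Delta_{\max}) = O(\omega(MST))$, so the basic spanner $H^*$ inherits lightness $O(1)$ w.h.p. For the fault-tolerant version, the bound from Section \ref{FT} gives $\omega(H^*_{FT}) \le O(k^2)\cdot\bigl(\omega(H^*)+\sum_{v\in T} rad(v)\bigr)$, and both terms on the right are $O(\omega(MST))$ w.h.p.\ by the strengthened Lemma \ref{maini}; hence the lightness of $H^*_{FT}$ is $O(k^2)$ w.h.p. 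The degree and diameter analyses from Theorem \ref{thm2} are oblivious to the weight bound and carry over unchanged, as does the $O(n \cdot (\log n + k^2))$ running time.

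The only real obstacle is the $d$-dimensional generalization of Lemma \ref{maini}; everything else is a black-box reuse of Sections \ref{basic} and \ref{FT} together with the $\Omega(n^{(d-1)/d})$ MST lower bound. I expect the calculation in step (i) to work cleanly because the two regimes (below and above the scale $n^{-1/d}$) are both geometric and each telescopes to $O(n^{(d-1)/d})$ for every $d\ge 2$; the exponent $(d-1)/d \in (0,1)$ is precisely what makes both the packing-based upper bound and the MST lower bound match, so constant lightness falls out.
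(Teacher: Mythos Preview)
Your proposal is correct and follows exactly the approach the paper takes: the paper proves Lemma~\ref{maini} for $d=2$, states that ``our argument extends to higher constant dimensions,'' combines it with the MST lower bound (Lemma~15.1.6 of \cite{NS07}) to get constant lightness of $H$ w.h.p., and then reruns Sections~\ref{basic} and~\ref{FT} unchanged. Your $d$-dimensional computation with the threshold $r_j \approx n^{-1/d}$ and the packing bound $n_i = O(1/r_i^d)$ is precisely the intended generalization (one wording slip: the high-level geometric series is dominated by its \emph{first}, i.e.\ largest, term $O(1/r_{j+1}^{d-1})$, not its smallest).
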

{\bf Remark:} The diameter of the tree-like spanner of \cite{GR082} is  $O(\log \Delta)$, where $\Delta$ is the aspect-ratio of the metric.
It is not difficult to show that the aspect ratio of random point sets in the unit $d$-dimensional cube is w.h.p.\ polynomially bounded, and so
the diameter of the tree-like spanner of \cite{GR082} is w.h.p.\ $O(\log n)$. 
Hence, for random point sets in the unit $d$-dimensional cube,
the tree-like spanner of \cite{GR082} achieves w.h.p.\ logarithmic diameter and w.h.p.\ constant lightness,
together with constant degree. (While the bounds on the diameter and lightness are probabilistic,   the bound on the degree is deterministic.)
 % (which always holds).
%The tree-like spanner also achieves a constant degree.
% logarithmic diameter and lightness.
Nevertheless, it is still advantageous to shortcut the light subtrees, because the diameter bound of the resulting spanner
%as a result the bound on the diameter 
becomes deterministic,
whereas all the other measures do not increase by more than constant factors.
\vspace{0.12in}
\\
{\bf 4.2 ~1-Dimensional Random Point Sets.~}
%\subsection{1-Dimensional Random Point Sets}
Let $S$ be an arbitrary set of $n$ evenly spaced points on the $x$-axis, and denote the distance between consecutive points by $\eta$.
Dinitz et al.\ \cite{DES08} proved that any spanner $H$ for $S$ with diameter $O(\log n)$ has lightness $\Omega(\log n)$, 
or equivalently, weight $\Omega(\log n \cdot (\eta \cdot n))$.
This lower bound on the weight was generalized
for Steiner spanners in \cite{ES09}.
We remark that the points need not be evenly spaced for this lower bound on the weight to hold.
That is, the lower bound $\Omega(\log n \cdot (\eta \cdot n))$ on the weight remains valid as long as the minimum distance
between consecutive points is at least $\eta$. \\We summarize this result in the following theorem.
\begin{theorem} [\cite{DES08,ES09}] \label{steinerLow}
Let $S$ be a set of $n$ points on the $x$-axis in which the minimum distance between consecutive points is at least $\eta$.
Any (possibly Steiner) spanner for $S$ with diameter $O(\log n)$ has weight $\Omega(\log n \cdot (\eta \cdot n))$.
\end{theorem}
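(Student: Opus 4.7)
The plan is to establish the weight lower bound by a scale-by-scale charging argument applied to the edges of an arbitrary (possibly Steiner) spanner $H$ for $S$ of diameter $\Lambda = O(\log n)$. At a high level, I would identify $\Omega(\log n)$ independent dyadic ``scales'' of edges, each of which must contribute $\Omega(\eta n)$ weight, and sum the contributions.

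Concretely, I would classify the edges of $H$ into dyadic length buckets $B_i = \{e \in E(H) : \omega(e) \in [2^i \eta, 2^{i+1} \eta)\}$ for $i = 0, 1, \ldots, O(\log n)$. For each scale $i$ I would select $\Omega(n / 2^i)$ ``witness pairs'' $(p, p')$ from $S$ whose mutual distance lies in $[2^i \eta, 2^{i+1} \eta)$ and whose closed intervals $[p, p']$ are pairwise disjoint on the $x$-axis; this is possible because the minimum spacing of $S$ is at least $\eta$. By the diameter hypothesis, each witness pair admits an $H$-path of hop-length at most $\Lambda$ and total weight $(1 + \eps) \cdot 2^{i+1} \eta$, so by pigeonhole the path contains an edge of weight $\Omega(2^i \eta / \Lambda)$. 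A packing argument on the $x$-axis then shows that each individual edge can ``serve'' only $O(1)$ of these disjoint witness pairs at its own scale, which after a careful charging across the $O(\log \Lambda)$ possible buckets of the characteristic edge forces the total weight at each scale to be $\Omega(\eta n)$.

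The main obstacle is handling the Steiner case, since Steiner edges need not lie on the $x$-axis and can in principle be shared heavily across witness pairs through auxiliary vertices. The standard workaround is a projection reduction: orthogonally project every Steiner vertex onto the $x$-axis and observe that this does not increase any edge weight, preserves the diameter, and preserves the spanner property, since orthogonal projection onto a line is $1$-Lipschitz and hence the projected spanner paths are no longer than the originals. This reduces the Steiner case to the non-Steiner case, where the packing argument above applies directly. The remaining technical subtlety is the tight tracking of the charging across adjacent scales, which is standard but delicate and appears in \cite{DES08,ES09}.
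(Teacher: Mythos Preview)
The paper does not prove this theorem; it is quoted as a known result from \cite{DES08,ES09}, so there is no in-paper argument to compare your sketch against.

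On the sketch itself: the projection step is sound (and indeed standard), though note it only reduces to the case where all vertices lie on the $x$-axis --- the projected graph is still a Steiner spanner, not a non-Steiner one as you claim. The substantive gap is the cross-scale accounting. From pigeonhole you get a heavy edge of weight $\ge 2^i\eta/\Lambda$ per scale-$i$ witness; together with the $O(1)$ packing bound this yields only $\Omega(n\eta/\Lambda)$ total heavy-edge weight per scale, not $\Omega(n\eta)$. If instead you charge the \emph{entire} path (total length $\ge 2^i\eta$), you do obtain $\Omega(n\eta)$ of edge weight used at scale $i$, but now a single short edge can lie on witness paths at every scale $i$ above its own length bucket --- that is $\Theta(\log n)$ scales, not $O(\log\Lambda)$ --- so summing the per-scale bounds overcounts by a $\Theta(\log n)$ factor and collapses to $\omega(H)=\Omega(n\eta)$ rather than $\Omega(n\eta\log n)$. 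Your ``careful charging across the $O(\log\Lambda)$ possible buckets'' does not address this, because the multiplicity you must control is the number of scales an edge can serve, and that number is governed by $\log n$, not by $\log\Lambda$. The proofs in \cite{DES08,ES09} use a genuinely different mechanism to avoid this loss, and that mechanism is the missing ingredient in your plan.
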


Consider a set $V$ of $n$ points that are chosen independently and uniformly at random from the unit interval.
%, with v_1 < v_2 < ... < v_n.
Clearly, the weight $\omega(MST(V))$ of the $MST$ for $V$ is w.h.p.\ $\Theta(1)$.

We argue that w.h.p.\ there exists a subset $\tilde V$ of $V$ that consists of $\tilde n = c \cdot n$ points, for some constant $0 < c < 1$,
such that the minimum distance between consecutive points in $\tilde V$ is at least $\frac{1}{n}$. 
The proof of this assertion
follows  similar lines as those in the proof of Lemma 15.1.6 from \cite{NS07}, and is thus omitted.

Let $H$ be a (possibly Steiner) spanner for $V$ with diameter $O(\log n) = O(\log \tilde n)$. Note that $H$ is a Steiner spanner for $\tilde V$, 
where all points of $V \setminus \tilde V$ (and possibly some other points too) serve as Steiner points. 
%Let H' be the corresponding spanner for V'. (Each edge (v_i,v_j) in H is transformed into the edge (v'_i,v'_j).)
%  Similarly, H' is also a Steiner spanner for U', with V' \setminus U' serving as the Steiner points. Clearly, H and H' have the same
%  hop-diameter O(log n), and the weight wt(H') of H' is no greater (by item 4.a) than the weight wt(H) of H.
By Theorem \ref{steinerLow} (substituting $S$ with $\tilde V$, $n$ with $\tilde n$,   and $\eta$ with $\frac{1}{n}$), 
%the weight $\omega(H)$ of $H$ must be at least 
$\omega(H) ~=~ \Omega(\log \tilde n \cdot (\frac{1}{n} \cdot \tilde n)) ~=~ \Omega(\log n)$, and so w.h.p.\ $\omega(H) = \Omega(\log n) \cdot \omega(MST(V))$.
Consequently, the lightness $\Psi(H)$ of $H$ is w.h.p.\ $\Omega(\log n)$.
\begin{corollary}
Let $V$ be a set of $n$ points that are chosen independently and uniformly at random from the unit interval.
Any (possibly Steiner) spanner for $V$ with diameter $O(\log n)$ has lightness w.h.p.\ $\Omega(\log n)$.
\end{corollary}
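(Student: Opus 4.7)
The plan is to combine the Steiner lower bound Theorem \ref{steinerLow} with a probabilistic ``thinning'' argument, essentially formalizing the sketch given in the paragraphs preceding the corollary. First I would verify that $\omega(MST(V)) = \Theta(1)$ with high probability: since $V$ consists of $n$ i.i.d.\ uniform points in $[0,1]$, the $MST$ is just the path through the points in sorted order, whose total weight telescopes to the range of the sample, which is $1 - o(1)$ w.h.p., while being trivially bounded above by $1$.

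The heart of the argument is producing a subset $\tilde V \subseteq V$ of size $\tilde n = cn$ for some absolute constant $c \in (0,1)$, in which every two consecutive points are separated by at least $\eta := 1/n$. I would construct $\tilde V$ greedily: sort $V$ as $x_1 \le x_2 \le \cdots \le x_n$, include $x_1$, and then include each subsequent $x_i$ if it is at distance $\geq 1/n$ from the most recently included point. To show $|\tilde V| \geq cn$ w.h.p., I would analyze the sequence of consecutive gaps $g_i = x_{i+1} - x_i$, which are distributed as spacings of the uniform order statistics; in particular, each gap independently (up to weak coupling) satisfies $\Pr[g_i \geq 1/n] \geq e^{-1} - o(1)$. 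A standard concentration argument (a Chernoff-type bound applied after discretizing the dependencies, or else a direct martingale argument on the greedy process, exactly as in Lemma~15.1.6 of \cite{NS07}) then yields that a constant fraction of points survive the thinning with high probability.

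Once $\tilde V$ is in hand, the rest is immediate. Let $H$ be any (possibly Steiner) spanner for $V$ with diameter $O(\log n)$. Then $H$ is automatically a Steiner spanner for $\tilde V$ as well, using the points of $V \setminus \tilde V$ (together with any original Steiner points of $H$) as Steiner points; moreover its diameter is still $O(\log n) = O(\log \tilde n)$ since $\tilde n = \Theta(n)$. Applying Theorem \ref{steinerLow} to $\tilde V$ with $\eta = 1/n$ gives
\[
\omega(H) \;=\; \Omega\!\left(\log \tilde n \cdot \tfrac{1}{n} \cdot \tilde n\right) \;=\; \Omega(\log n).
\]
Combining this with $\omega(MST(V)) = \Theta(1)$ w.h.p.\ yields $\Psi(H) = \omega(H)/\omega(MST(V)) = \Omega(\log n)$ with high probability, as claimed.

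The main obstacle is the thinning step: one must verify that the greedy sparsification produces a linear-sized well-separated subset with exponentially small failure probability. The dependencies among consecutive gaps in uniform order statistics make a naive Chernoff bound inapplicable, so I would either invoke the exponential/Gamma representation of uniform spacings to decouple them, or use a bounded-differences inequality on the function $|\tilde V|$ viewed as a function of the $n$ independent samples. Once this concentration is established, the remainder of the argument is a one-line substitution into Theorem \ref{steinerLow}.
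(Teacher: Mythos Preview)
Your proposal is correct and follows essentially the same approach as the paper: establish $\omega(MST(V))=\Theta(1)$ w.h.p., extract a linear-sized subset $\tilde V\subseteq V$ with consecutive spacing at least $1/n$, observe that any spanner $H$ for $V$ is a Steiner spanner for $\tilde V$, and then invoke Theorem~\ref{steinerLow}. The paper in fact omits the proof of the thinning step entirely (deferring to the argument of Lemma~15.1.6 in \cite{NS07}), so your greedy construction and concentration sketch simply supply details the paper leaves implicit.
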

%6. From my ESA'09 paper with Elkin (Corollary 2), it follows that any Steiner spanner with hop-diameter O(log n) for n evenly-spaced 
%  points in an interval of length I has weight Omega(log n \cdot I). Consequently, since H' is a Steiner spanner for a set U' of \tilde n 
%  evenly-spaced points in an interval of length roughly alpha (by item 4.b), and its hop-diameter is at most O(log n) = O(log \tilde n), 
%  its weight wt(H') is at least Omega(log \tilde n \cdot alpha) = \Omega(log n) = \Omega(log n) \cdot \omega(MST(V)).
%7. We conclude that wt(H) \ge wt(H') = Omega(log n) \cdot \omega(MST(V)), and we are done.
\ignore{
And from this point on we can save items 5 and 6 below.
The argument will be extremely short, and I won't take credit for it.

================

4. By bringing points of V closer to each other in an appropriate way, we can transform V into V', where for 
  each 1 \le i \le n, v_i is transformed into v'_i; v'_1 < v'_2 < ... < v'_n; and the following two conditions hold:
  a) For all 1 \le i < j \le n, ||v'_i - v'_j|| \le ||v_i - v_j||.
  b) The distance between any pair of consecutive points in U' = {u'_1,u'_2,...,u'_{\tilde n}} is exactly .
}
\vspace{0.1in}
{\bf Acknowledgments.~}
The author is grateful to Michael Elkin and Michiel Smid for  helpful discussions.

%\clearpage

%\bibliographystyle{latex8}
%\bibliography{latex8}

\end {document}